\newtheorem{theorem}{Theorem}
\let\norm\undefined 
\DeclarePairedDelimiter\norm{\lVert}{\rVert}
\title{Stability and Robustness of a Hybrid Control Law \\ for the Half-bridge Inverter}
\author{Gabriel E. Colón-Reyes, Kaylene C. Stocking, Duncan S. Callaway, Claire J. Tomlin}
\date{January 2022}
\begin{document}

\maketitle

\begin{abstract}
Hybrid systems combine both discrete and continuous state dynamics. Power electronic inverters are inherently hybrid systems: they are controlled via discrete-valued switching inputs which determine the evolution of the continuous-valued current and voltage state dynamics. 

Hybrid systems analysis could prove increasingly useful as large numbers of renewable energy sources are incorporated to the grid with inverters as their interface. In this work, we explore a hybrid systems approach for the stability analysis of power and power electronic systems. We provide an analytical proof showing that the use of a hybrid model for the half-bridge inverter allows the derivation of a control law that drives the system states to desired sinusoidal voltage and current references.
We derive an analytical expression for a global Lyapunov function for the dynamical system in terms of the system parameters, which proves uniform, global, and asymptotic stability of the origin in error coordinates. Moreover, we demonstrate robustness to parameter changes through this Lyapunov function. We validate these results via simulation.

Finally, we show empirically the incorporation of droop control with this hybrid systems approach. In the low-inertia grid community, the juxtaposition of droop control with the hybrid switching control can be considered a grid-forming control strategy using a switched inverter model. 

\end{abstract}

\section{Introduction}
The power grid is seeing a large portion of its generation replaced by renewable energy. Therefore, the energy conversion process is changing and converter-interfaced generation (CIG) systems, primarily in the form of power electronic inverters, are becoming the primary interface between energy sources and loads.

\subsection{The Need for New Inverter Controls}

Inverters are highly controllable devices that convert a DC energy source, like a solar photovoltaic panel, to a grid-compatible AC energy form. Fig. \ref{fig:header_} presents the standard interconnection of a solar photovoltaic panel with the grid through an inverter, where we show the hybrid control strategy interfacing a grid-forming control strategy with the inverter. Our contributions focus on this hybrid control strategy. 

Since the number of inverters in the power grid is increasing, and because they react to programmable instructions, the design of control strategies
for power grids with large numbers of CIGs is considered a problem of paramount importance in the low-inertia grid research community \cite{lin2020research}. Emphasizing this point, the European Union funded in 2016 the MIGRATE (\textbf{M}assive \textbf{I}nte\textbf{grat}ion of Power \textbf{E}lectronic Devices) project \cite{MIGRATE} to study fundamental challenges associated with incorporating renewables in the grid. The team was composed of experts from over 20 participating institutions from industry and academia. Furthermore, in 2021, the United States of America's Department of Energy funded the UNIFI (\textbf{Uni}versal Interoperability for Grid‐\textbf{F}orming \textbf{I}nverters) Consortium \cite{UNIFI}, consisting of experts from over 40 participating institutions, to answer questions of a similar nature.

\begin{figure}[t]
\begin{centering}
    \includegraphics[scale=1.25]{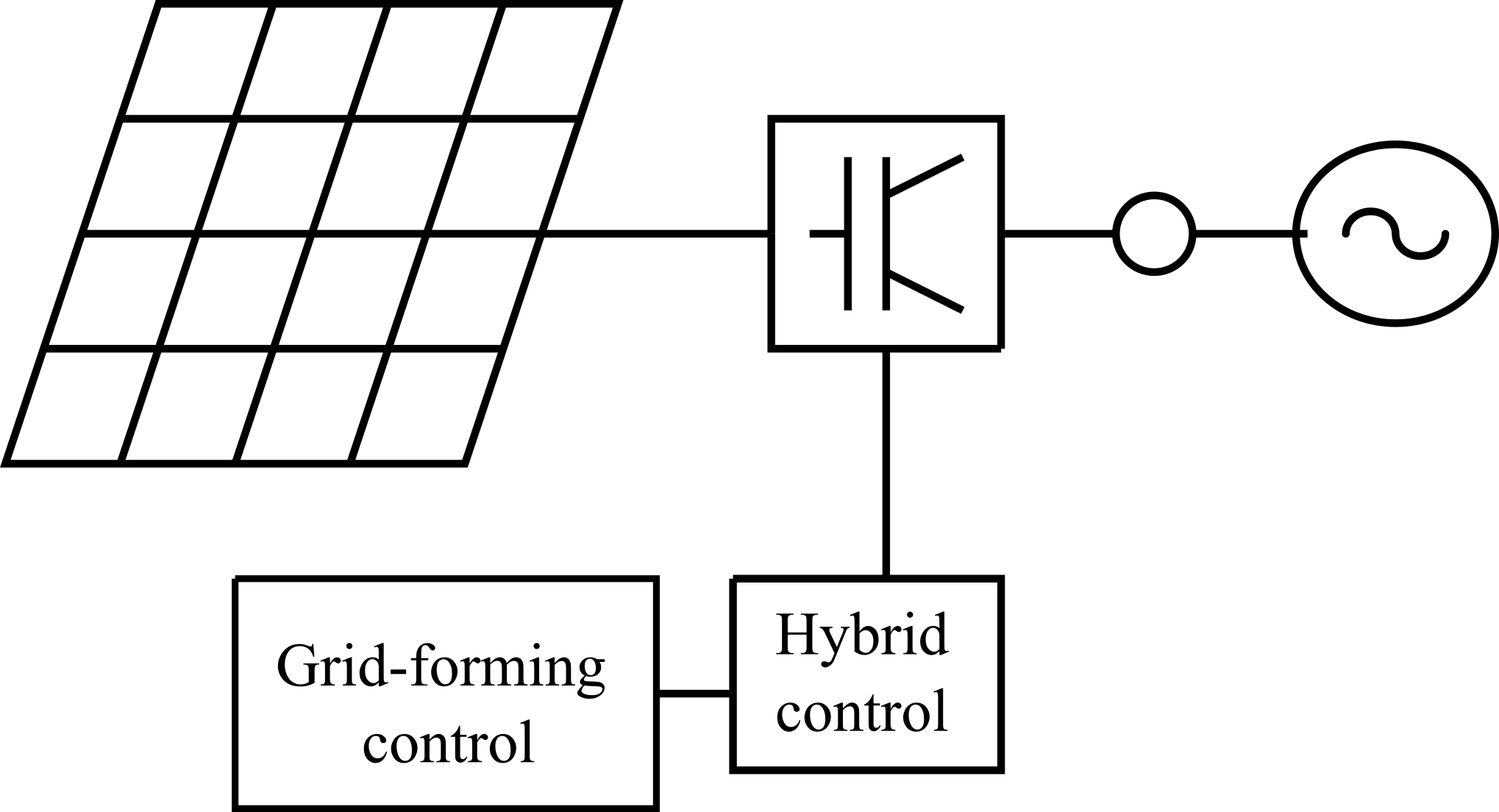}
    \caption{Typically, a DC energy source, such as a solar photovoltaic panel (shown on the far left), powers an inverter, which is controlled via a grid-forming control strategy. We present where our hybrid control strategy would find itself in the context of an interconnection with the grid.} 
    \label{fig:header_}
\end{centering}
\end{figure}

Recent work further recognizes the need to consider new control methodologies for inverters \cite{milano2018foundations}. Accentuating this need, the IEEE task force for stability definitions and classification for low-inertia grids has specifically recognized the importance of hybrid systems concepts for the study of power and power electronic systems \cite{hatziargyriou2020stability}.
Hence, due to the importance that inverter control strategies will play in the future grid, we believe it is imperative to explore the whole landscape of possible controllers, including those that, as in this work, rely on the more physically realistic hybrid systems model of the inverter.

One approach to inverter control is the grid-forming paradigm, in which the inverter is configured as a controllable voltage source with provisions that enable power sharing in parallel operation. A future grid will likely need a large portion of its generation controlled via grid-forming controls \cite{lin2020research}.

Several grid-forming control strategies have been proposed with strong theoretical guarantees and validation through simulation and experimentation \cite{johnson2013synchronization, beck2007virtual, arghir2018grid, markovic2021understanding}. However, their implementations thus far use pulsed-width modulation (PWM) strategies whose analysis rely on the inverter averaged model and lead to the well-studied nested voltage and current feedback loop structure. The averaged model constrains the kinds of control methods that can be implemented by providing a template for which grid-forming control strategies need to fit within, and prevents analysis of very fast control dynamics at the system level.

\subsection{Hybrid Systems for Power and Power Electronic Systems}

The field of hybrid systems has only seen limited use in the context of the stability analysis and control design for power and power electronic systems. 

In the power systems literature, most of the known work has attempted to model general power system problems within a hybrid dynamical systems framework. For example, some applications include the modeling of continuous-valued state variables (e.g., currents and voltages) in systems with discrete-valued control or protection actions (e.g., circuit breaker operation) \cite{hiskens2004power}. Other work has considered the variation of system-level inertia in the context of the swing equations and modeled those dynamics in a hybrid dynamical systems framework \cite{hidalgo2018frequency}.

In the power electronics literature, hybrid systems theory has been primarily used in the context of DC-DC converters. One of the first reported uses of a hybrid systems framework for power electronic converters was in \cite{senesky2003hybrid}, where the authors derive conditions for a safe set within which a designed hybrid control law is able to keep the system states. Further work derives a set of control laws, also for DC-DC converters, that achieve global, asymptotic stability to the desired operating point \cite{buisson2005stabilisation}.

More recently, some of these ideas have started to appear in the control of power electronic inverters, for which the desired reference signal is a time-varying sinusoid instead of a constant setpoint. To the best of our knowledge,  \cite{albea2017hybrid} is the first attempt to have an inverter track a sinusoidal reference with a hybrid control law. The authors of \cite{albea2017hybrid} derive a model for the half-bridge inverter in error coordinates and show that the solution of an optimization problem leads to the desired behavior.

\subsection{Summary of Contributions}



Specifically, beyond prior results, this work's contributions include i.) an alternate  exposition and proof of the hybrid control method in \cite{albea2017hybrid}, including an explicit, closed-form expression for the control law not previously presented, ii.) an analytical derivation for a global Lyapunov function that proves uniform, global, asymptotic stability of the origin in error coordinates thus achieving the tracking of the reference signal, iii.) a method to update the controller to be robust against resistive load changes, iv.) analysis and validation of the controller in simulation, and v.) a demonstration in simulation of the controller's operation in conjunction with a grid-forming control strategy, suggesting the potential of hybrid systems-based control in this practical context.




\subsection{Paper Organization}

The rest of the paper is organized as follows. Section \ref{sec: prob form} presents the notation and problem formulation, Section \ref{sec: theory} presents our theoretical results, and Section \ref{sec: experiments} our experimental results. Finally, Section \ref{sec: conclusion} concludes the paper and discusses limitations and future work.

\section{Problem Formulation}
\label{sec: prob form}
In this section, we present our modeling approach for the half-bridge inverter and the control problem we aim to solve. The model we use was first proposed in \cite{albea2017hybrid}.

\subsection{Notation}

Dot notation indicates the time derivative of a variable, i.e., $\frac{dx}{dt} = \dot{x}$. Bold-faced capital letters will indicate matrices. $\sigma(\bold{A})$ is the spectrum of $\bold{A}$, and $\lambda_\bold{A}$ an eigenvalue of $\bold{A}$. $\norm{\cdot}_2$ is the Euclidean vector norm. The operator $sign(\cdot)$ returns $+1$ if its scalar argument is greater than or equal to zero, and $-1$ otherwise. The operator $Re\{ \cdot \}$ takes the real part of its argument.

\subsection{Switched Model}

\begin{figure}[t]
\begin{centering}
    \includegraphics[scale=0.9]{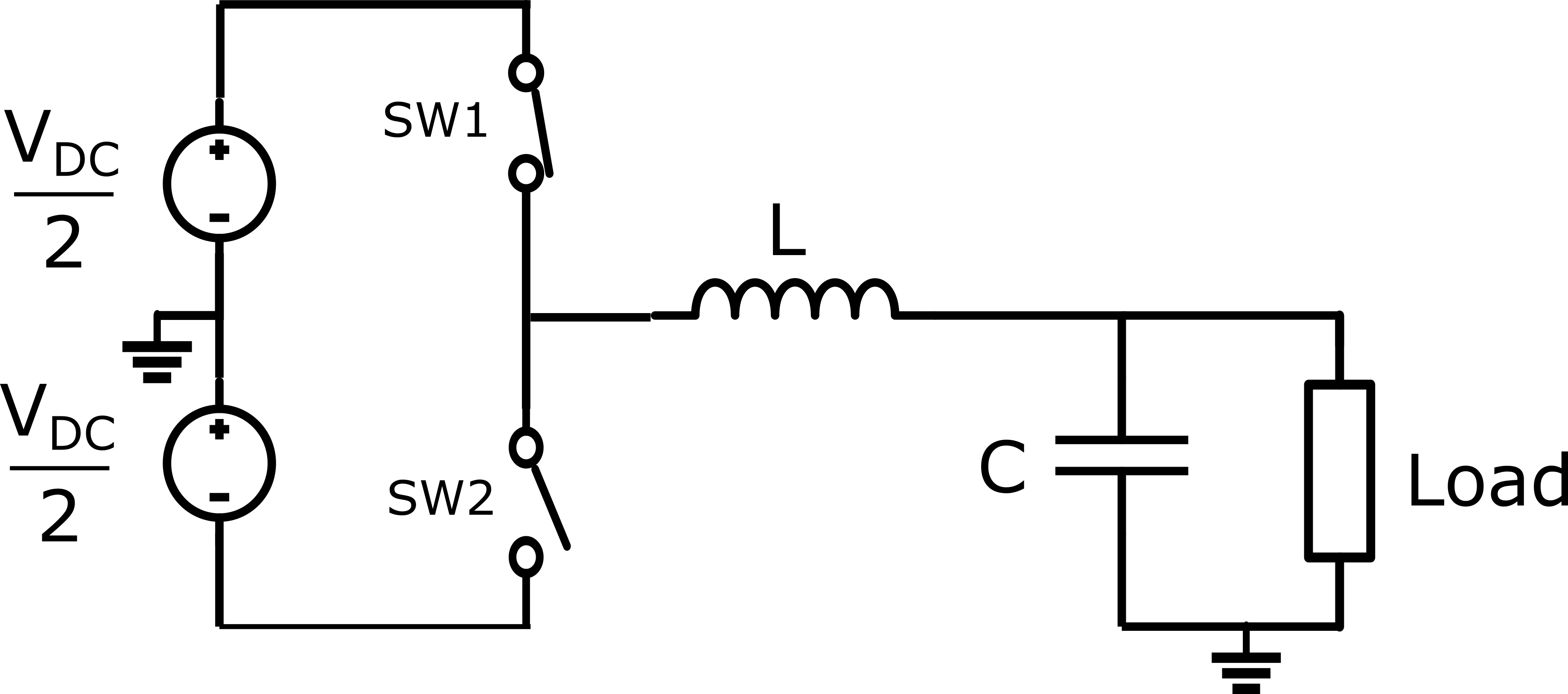}
    \caption{The half-bridge inverter is composed of a midpoint-grounded voltage source $V_{DC}$, a pair of switches, and an LC filter. It delivers power to an arbitrary load.}
    \label{fig:half_bridge}
\end{centering}
\end{figure}

We study the half-bridge inverter shown in Fig. \ref{fig:half_bridge}. We consider a resistive load, $R$, as the natural first case study. We define two discrete states: i.) SW1 on and SW2 off, and ii.) SW1 off and SW2 on. We also define the control input to be $u \in \{+1, -1\}$, where $u = +1$ denotes the first discrete operating state, and $u = -1$ denotes the second. We define the state vector as $x = \begin{bmatrix} v_C & i_L \end{bmatrix}^\top \in \mathbb{R}^2$, where $v_C$ and $i_L$ are the capacitor voltage and inductor current, respectively.
Applying Kirchhoff's current and voltage laws to model the voltage and current dynamics for each discrete state, we obtain equation \eqref{eq: switched_dynamics}.  Conveniently, the choice of control input, $u \in \{ +1,-1\}$, characterizes the inverter's terminal voltage polarity in Kirchhoff's voltage law equations such that the dynamic evolution of both discrete states can be compactly characterized by a single dynamical system. The model describes a linear, time-invariant (LTI) dynamical system with a binary-valued, discrete control input.
\begin{subequations}
\begin{align}
\label{eq: switched_dynamics}
\dot{x} &= \bold{A}x + \bold{B}u \\
\bold{A} &= \begin{bmatrix} -\frac{1}{RC} & \frac{1}{C} \\ -\frac{1}{L} & 0 \end{bmatrix}, \; \bold{B} = \begin{bmatrix} 0 \\ \frac{V_{DC}}{2L} \end{bmatrix}
\end{align}
\end{subequations}

\subsection{Designing the Reference}
The control objective is for $v_C$ to track a sinusoidal reference signal. Following \cite{albea2017hybrid}, we first define an oscillator system that produces the appropriate reference signal. Then, we reformulate the dynamical system in terms of error from the desired reference state.

We denote the reference vector $x_{ref}=\begin{bmatrix} v_{C,ref} & i_{L,ref} \end{bmatrix}^{\top}$. We choose a sinusoidal voltage reference $v_{C,ref}(t) = V_m \text{sin}(\omega t)$, with amplitude $V_m$ and frequency $\omega$, both being design parameters. Kirchhoff's laws allow us to express the appropriate inductor current reference $i_{L,ref}$ in terms of these parameters as $i_{L,ref}(t) = \omega C V_m \text{cos}(\omega t) + \frac{1}{R} V_m \text{sin}(\omega t)$.

To design the reference, we define an oscillator system of the form of \eqref{eq: osc_dynamics}.
\begin{align}
\label{eq: osc_dynamics}
\dot{z} = \bold{\Theta} z = \begin{bmatrix} 0 & \omega \\ -\omega & 0 \end{bmatrix}z
\end{align}
Note that $\bold{\Theta}$ is a skew-symmetric matrix with eigenvalues $\lambda_{\bold{\Theta}} \in \{+j\omega, -j\omega\}$ defining an oscillator. Therefore, we know that, for a given initial condition, $z(t_0)$, at time $t_0$ satisfying $$ z(t_0) \in \{ z \in \mathbb{R}^2 \; |  \; \norm{z}_2^2 = V_m^2 \},$$ $z(t) = \begin{bmatrix} V_m \text{sin}(\omega t) & V_m \text{cos}(\omega t) \end{bmatrix}^{\top}$. 

We define our state reference to be a linear transformation of the oscillator state, $z$, to achieve $x_{ref}$ by
\begin{align}
\label{eq: reference}
x_{ref} = \bold{\Pi} z = \begin{bmatrix} 1 & 0 \\ \frac{1}{R} & \omega C \end{bmatrix}z.
\end{align}
We further compute the reference dynamics to arrive at \eqref{eq: reference_dynamics_c}.
\begin{subequations}
\begin{align}
\label{eq: reference_dynamics_a}
\dot{x}_{ref} &= \bold{\Pi} \dot{z} \\
\label{eq: reference_dynamics_b} &= \bold{\Pi} \bold{\Theta} z \\
\label{eq: reference_dynamics_c} &= (\bold{A} \bold{\Pi} + \bold{B} \bold{\Gamma}) z,
\end{align}
\end{subequations}
where
\begin{align}
\label{eq: Gamma}
\bold{\Gamma} = \frac{2}{V_{DC}} \begin{bmatrix} \omega \frac{L}{R}  & (1 - \omega ^2 LC) \end{bmatrix} = \begin{bmatrix} \Gamma_1 & \Gamma_2 \end{bmatrix}.
\end{align}

Our subsequent analysis will be simplified by defining the state error from the reference signal as $e=x-x_{ref}$. Combining equations \eqref{eq: switched_dynamics} and \eqref{eq: reference_dynamics_c}, we have that 
\begin{subequations}
\begin{align}
\label{eq: error_dynamics_a}
\dot{e} &= \bold{A}x + \bold{B}u - (\bold{A} \bold{\Pi} z + \bold{B} \Gamma z) \\
\label{eq: error_dynamics_b}
&= \bold{A}e + \bold{B}(u - \bold{\Gamma} z).
\end{align}
\end{subequations}
Together, equations \eqref{eq: osc_dynamics} and \eqref{eq: error_dynamics_b} define the system dynamics.

Now that we have fully characterized the system dynamics in error coordinates, the problem we are trying to solve is to find a control law $u(t)$ that drives the error coordinates to the origin, implying that the system states will track the desired reference signals.

\section{Theoretical Results for Global Asymptotically-Stable Reference Tracking}
\label{sec: theory}
In this section, we present our theoretical stability results including i.) a proof for deriving an explicit control law that drives the error dynamics to the origin, ii.) an analytical solution to the Lyapunov equation as a function of the system's parameters, and iii.) a proof showing that we can appropriately adjust our control law to be robust against known changes in load, and still track the desired trajectory.

\subsection{Global Asymptotic Stability Result}
\label{sec: switching result}


\begin{theorem}
\label{thm:stability}
Let $\bold{A}$ be Hurwitz, i.e. $Re\{\lambda_\bold{A}\} < 0 \; \; \forall \lambda_\bold{A} \in \sigma(\bold{A})$, and let $\norm{\bold{\Gamma}}_2 < \frac{1}{V_m}$. Then, the switching policy
\begin{align}
\label{eq: policy}
   u = -sign(\bold{B}^{\top}\bold{P}e)
\end{align} 
results in the uniform, global, asymptotic stability of the origin for the error dynamics \eqref{eq: error_dynamics_b}.
\end{theorem}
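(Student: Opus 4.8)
The plan is to construct a quadratic Lyapunov function from the Hurwitz structure of $\bold{A}$ and show its derivative is strictly negative away from the origin. Since $\bold{A}$ is Hurwitz, Lyapunov theory guarantees that for any symmetric positive definite $\bold{Q}$ there is a unique symmetric positive definite $\bold{P}$ solving the Lyapunov equation $\bold{A}^\top \bold{P} + \bold{P}\bold{A} = -\bold{Q}$. I would take this $\bold{P}$ to be exactly the matrix appearing in the switching policy \eqref{eq: policy}, and propose the candidate $V(e) = e^\top \bold{P} e$, which is positive definite, radially unbounded, and time-invariant (the latter being important for a \emph{uniform} conclusion despite the non-autonomous dependence on $z(t)$).

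Next I would differentiate $V$ along the error dynamics \eqref{eq: error_dynamics_b}. Substituting $\dot{e} = \bold{A}e + \bold{B}(u - \bold{\Gamma} z)$ and collecting terms gives
\begin{equation}
\dot{V} = e^\top (\bold{A}^\top \bold{P} + \bold{P}\bold{A}) e + 2(u - \bold{\Gamma} z)\, \bold{B}^\top \bold{P} e,
\end{equation}
where I have used that $u - \bold{\Gamma} z$ is scalar, so the two cross terms coincide. The Lyapunov equation collapses the first term to $-e^\top \bold{Q} e$, which is strictly negative for $e \neq 0$; it then remains to show the control cross term is nonpositive.

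Writing $s = \bold{B}^\top \bold{P} e$, the switching policy $u = -sign(s)$ yields $us = -|s|$, so the cross term becomes $-2|s| - 2(\bold{\Gamma} z)s$. The key estimate bounds the disturbance-like term: by Cauchy--Schwarz, $|\bold{\Gamma} z| \le \norm{\bold{\Gamma}}_2 \norm{z}_2 = \norm{\bold{\Gamma}}_2 V_m$, using that the oscillator state satisfies $\norm{z}_2 = V_m$ for all time by construction of the reference. Hence
\begin{equation}
2(u - \bold{\Gamma} z)s \le -2|s|\bigl(1 - \norm{\bold{\Gamma}}_2 V_m\bigr) \le 0,
\end{equation}
where the final inequality invokes the hypothesis $\norm{\bold{\Gamma}}_2 < \tfrac{1}{V_m}$. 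Combining the two bounds gives $\dot{V} \le -e^\top \bold{Q} e < 0$ for all $e \neq 0$, uniformly in $t$, and standard Lyapunov theorems then deliver uniform, global, asymptotic stability of the origin.

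The main obstacle I anticipate is not the algebra but the nonsmooth right-hand side: the $sign$ function makes the closed-loop vector field discontinuous on the switching surface $s = 0$, so solutions must be interpreted in the Filippov sense and sliding motion may occur. I would argue that the decrease estimate survives this, since on $\{s = 0\}$ the cross term vanishes for every admissible value of $u$, leaving $\dot{V} \le -e^\top \bold{Q} e$ regardless; thus the Lyapunov decrease holds for the full differential inclusion, and a nonsmooth invariance argument closes the proof. A secondary point worth stating explicitly is the structural assumption $\norm{z}_2 = V_m$, which holds only when the oscillator is initialized on the prescribed circle.
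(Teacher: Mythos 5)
Your proposal is correct and follows essentially the same route as the paper: the same $\bold{P}$ from the Lyapunov equation, the same candidate $V(e)=e^{\top}\bold{P}e$, the same derivative decomposition, the same bound $|\bold{\Gamma} z|\le V_m\norm{\bold{\Gamma}}_2$, and the same sign argument forcing the cross term nonpositive. Your write-up is in fact slightly tighter in two places the paper glosses over --- folding both signs of $s=\bold{B}^{\top}\bold{P}e$ into the single estimate $-2|s|(1-\norm{\bold{\Gamma}}_2 V_m)$ rather than a one-sided ``worst case'' substitution, and flagging the Filippov/sliding-mode interpretation of the discontinuous closed loop --- but these are refinements of the same argument, not a different one.
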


\begin{proof}
Let $\bold{P} \in \mathbb{R}^{2 \times 2}$ be the symmetric, positive-definite matrix that satisfies the Lyapunov equation, $\bold{A}^{\top} \bold{P} + \bold{P}\bold{A} = \bold{Q}$. We choose $\bold{Q} = -\bold{I}$, where $\bold{I}$ is the $2 \times 2$ identity matrix. Note that $\sigma (-\bold{I}) = \{-1, -1\}$. The existence of such a $\bold{P}$ is guaranteed by the assumption that $\bold{A}$ is Hurwitz \cite{callier2012linear}. We propose the candidate Lyapunov function $V(e) = e^{\top}\bold{P}e$, which is globally positive definite \cite{sastry2013nonlinear}. Taking its time derivative, we have that
\begin{align}
\dot{V}(e) &= e^{\top}(\bold{A}^{\top}\bold{P} + \bold{P}\bold{A})e + 2(u - \bold{\Gamma} z)\bold{B}^{\top}\bold{P}e.
\end{align}

Furthermore, using our knowledge of the oscillator state, $z(t)$, we have that
\begin{subequations}
\begin{align}
\label{eq: Gamma z bound_a}
\bold{\Gamma} z &= V_m (\Gamma_1 \text{sin}(\omega t) + \Gamma_2 \text{cos}(\omega t) ), \\ 
\label{eq: Gamma z bound_b}
&= V_m\norm{\bold{\Gamma}}_2\text{sin}(\omega t + \psi), \; \psi = \text{arctan}(\frac{\Gamma_2}{\Gamma_1}), \\
\label{eq: Gamma z bound_c}
\Rightarrow |\bold{\Gamma} z | &\leq V_m\norm{\bold{\Gamma}}_2, \;\; \forall t.
\end{align}
\end{subequations}
Therefore, the worst case magnitude for $u - \Gamma z$, i.e., $\text{max.}_z \, \{u -\bold{\Gamma} z \}$, is obtained when $\text{sin}(\omega t + \psi) = -1$. So, we can upper bound $\dot{V}(e)$ using the Lyapunov equation and \eqref{eq: Gamma z bound_c} to yield the following. 
\begin{align}
\label{eq: lyapunov_inequality}
\dot{V}(e) \leq e^{\top}(-\bold{I})e + 2(u +V_m\norm{\bold{\Gamma}}_2)\bold{B}^{\top}\bold{P}e
\end{align}

We know the term $-e^{\top}e \leq 0, \; \forall e$. Therefore, to ensure stability, we force the second term, $2(u + V_m\norm{\bold{\Gamma}}_2)\bold{B}^{\top}\bold{P}e$, to also be less than or equal to zero using the following. Since by assumption $\norm{\bold{\Gamma}}_2 < \frac{1}{V_m}$, and $u \in \{+1, -1\}$ by design, then the term $u + V_m\norm{\bold{\Gamma}}_2$ will take the sign of $u$. By choosing a switching policy defined as $u = -sign(\bold{B}^{\top}\bold{P}e)$ under the stated assumptions, the second term will always be less than or equal to zero, and $\dot{V}(e) \leq 0$ always holds, implying that $V(e)$ is a valid, global Lyapunov function for the given dynamics. We can thus conclude that the origin in error coordinates is a uniformly-, globally-, asymptotically-stable equilibrium point. The states will track the desired trajectories in the original coordinates. \end{proof}

It is worth noting that the assumptions required for this theorem are not very restrictive. For a passive RLC circuit as we have here, it can be proven that the eigenvalues of $\bold{A}$ will always have strictly negative real parts. Moreover, $\norm{\bold{\Gamma}}_2 < \frac{1}{V_m}$ is easily satisfied for a range of realistic parameter values (the magnitude and frequency of the reference, the filter parameters, the load resistance, and the input voltage, as given by \eqref{eq: Gamma}).

It is also worth noting that the control policy \eqref{eq: policy} is conservative: \eqref{eq: lyapunov_inequality} is a worst-case upper bound for the Lyapunov function. This implies that it is, potentially, inherently robust to some disturbances.

\subsection{Explicit Lyapunov Function in Terms of R, L, and C}
\label{sec:P_matrix}


In general, finding an explicit Lyapunov function that guarantees stability of a hybrid dynamical system can be difficult, even if each discrete dynamic state is LTI. To do this, researchers often resort to semidefinite optimization programs \cite{borrelli2017predictive}. However, since the error coordinate dynamics (\ref{eq: error_dynamics_b}) reduce our system model to a single set of LTI dynamics, we can calculate the appropriate $\bold{P}$ matrix that satisfies the Lyapunov equation, $\bold{A}^{\top} \bold{P} + \bold{P}\bold{A} = \bold{Q}$, by solving a system of linear equations. 

In this section, we show that this procedure allows us to express the $\bold{P}$ matrix explicitly in terms of arbitrary load and inverter parameters, $R$, $L$, and $C$.

Noting that $\bold{P}$ is symmetric, we have that
\begin{equation}
    \bold{P} = \begin{bmatrix} p_{11} & p_{12} \\ p_{21} & p_{22} \end{bmatrix},
\end{equation}
where $p_{12} = p_{21}$.

We parameterize $\bold{Q} = -\alpha \bold{I}$ with a coefficient $\alpha$ to allow for performance tradeoffs. Therefore, the matrix equation $\bold{A}^{\top} \bold{P} + \bold{P}\bold{A} = -\alpha \bold{I}$ evaluates to
\begin{equation}
    \begin{bmatrix} \frac{-2p_{11}}{RC} - \frac{2p_{12}}{L} & \frac{p_{11}}{C} - \frac{p_{12}}{RC} - \frac{p_{22}}{L} \\ \frac{p_{11}}{C} - \frac{p_{12}}{RC} - \frac{p_{22}}{L} & \frac{2p_{12}}{L} \end{bmatrix} = -\alpha \begin{bmatrix} 1 & 0 \\ 0 & 1 \end{bmatrix}.
\end{equation}

We can use this equality to solve for $p_{11}$, $p_{12}$, and $p_{22}$. Doing so gives us an explicit solution for $\bold{P}$ in \eqref{eq: P}, as desired.
\begin{equation}
\label{eq: P}
    \bold{P} = \frac{\alpha}{2}\begin{bmatrix} RC + \frac{RC^2}{L} & -C \\ -C & RL + \frac{L}{R} + RC \end{bmatrix}
\end{equation}

Therefore, a function $V(e) = e^\top \bold{P}e$, with $\bold{P}$ given by \eqref{eq: P}, is guaranteed to be a global Lyapunov function for the dynamical system \eqref{eq: error_dynamics_b}.

\subsection{Controlling for Known Changes in Load}

One motivation for finding an analytical expression for $\bold{P}$ is to handle changes in the system parameters. Note that our control policy, \eqref{eq: policy}, relies on knowing $\bold{P}$. For example, if the load resistance, $R$, or filter parameters, $L$ or $C$, change, the $\bold{A}$ matrix that defines the inverter dynamics in \eqref{eq: switched_dynamics} will change accordingly, and the switching strategy derived in section \ref{sec: switching result} may no longer hold if we do not update $\bold{P}$ to reflect that change.

We have shown that we can use \eqref{eq: policy} to asymptotically drive the states to our desired sinusoidal reference signals. Moreover, we have shown that for a specific choice of $R$, $L$, $C$ parameters, we can analytically compute a $\bold{P}$ matrix which makes $V(e) = e^\top \bold{P}e$ a global Lyapunov function for the dynamical system, and thus we can implement our control law with such a $\bold{P}$.

Based on these two arguments, we now make the claim that if the system load changes, that is, $R$ changes from, for example, $R_1$ to $R_2$, and we know both loading scenarios, we can update our control law, \eqref{eq: policy}, through \eqref{eq: P} at the time the load changes. This, however, assumes that the load does not switch between different parameter values too quickly to allow the control to still asymptotically drive the error state to the origin \cite{liberzon2003switching}). This is a reasonable assumption from the power and power electronics systems perspectives.

We provide validation through simulation for these three theoretical results in the following section.



\section{Experimental Validation}
\label{sec: experiments}

In this section, we test in simulation the performance of our switching controller. 
The simulation parameters are shown in Table \ref{tab:parameters}. The inverter operates at a 1 MHz switching frequency. We assume the switches are ideal: operating instantaneously and without losses.

\begin{table}[t]
    \centering
    \begin{tabular}{c|c|c}
       \textbf{Parameter} & \textbf{Symbol} & \textbf{Value} \\ \hline
        Load resistance & $R$ & 50 $\Omega$ \\ \hline
        Inverter inductance & $L$ & 450 $\mu$H \\ \hline
        Inverter capacitance & $C$ & 2.5 mC \\ \hline
        DC supply voltage & $V_{DC}$ & 1,200 V \\
        \hline
        Target (reference) frequency & $f$ & 60 Hz \\
        \hline
        Target (reference) angular frequency & $\omega$ & $2 \pi f \frac{\text{rad}}{s}$ \\ 
        \hline
        Target (reference) magnitude & $V_m$ & 177 V \\
    \end{tabular}
    \caption{Parameter values used for simulation experiments.}
    \label{tab:parameters}
\end{table}

\begin{figure}
    \includegraphics[scale=0.7]{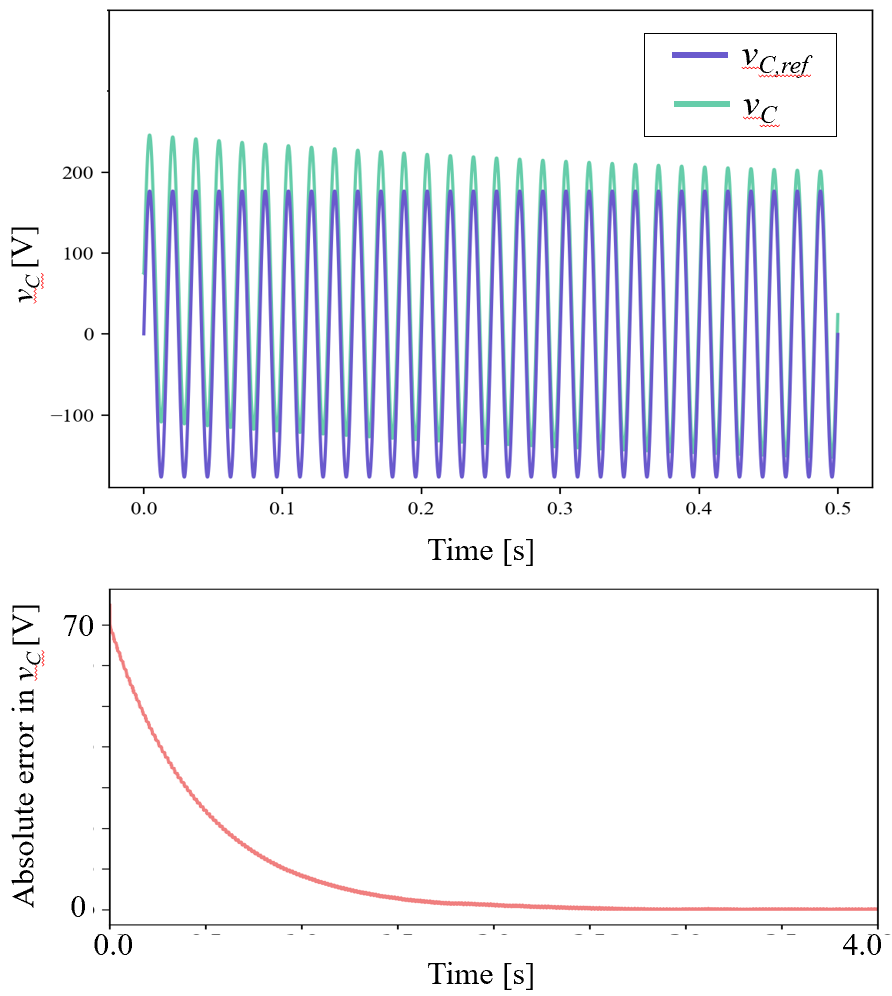}
     \caption{We show accurate tracking of the desired reference signal. In the top panel, our controller drives the inverter voltage, $v_C$, to the reference signal voltage, $v_{C,ref}$, when there is a 70 V difference in the initial condition at $t= 0$ s, over a half-second time window. The bottom panel shows the absolute error in the voltage tracking performance over a four-second time window showing convergence to the reference in steady state.}
    \label{fig:asymptotic_stability}
\end{figure}

\begin{figure*}
    \includegraphics[scale=0.54]{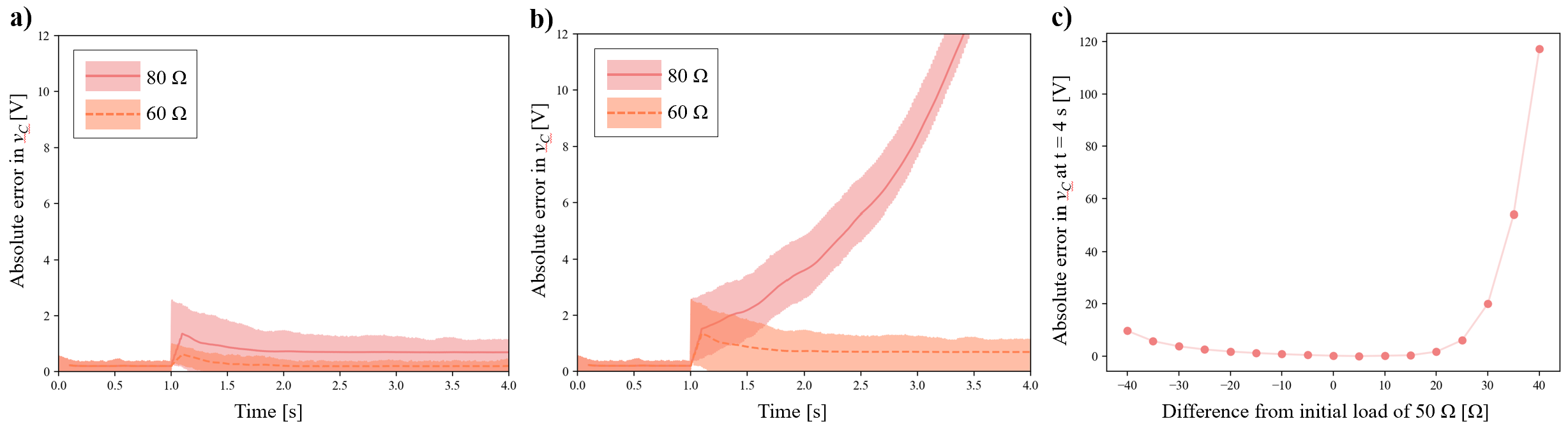}
    \caption{We show our controller's performance under known and unknown changing load conditions. a) We change our loading condition to $R = 60 \; \Omega$ and $R = 80 \; \Omega$ at $t=1.0$ s. Updating our controller in real time maintains accurate tracking of the reference signal, as evidenced by small error. b) For a change in load from 50 $\Omega$ to 60 $\Omega$ at $t = 1.0$ s, a non-updated controller still tracks the reference, although the tracking error increases slightly. A larger change to 80 $\Omega$ causes the error to diverge. c) We plot the tracking error at $ t = 4$ s for a variety of load conditions, specifically, resistance values starting at 10 $\Omega$ and ending at 90 $\Omega$, while not updating the controller to get a glimpse of its robustness. These figures suggest that the updated controller is robust to parameter changes, and the non-updated controller is also robust, but only to limited disturbances in the load conditions.}
    \label{fig:disturbed_load}
\end{figure*}

\subsection{Asymptotic Stability Under Constant Load}

We first test the performance of our controller at recovering the target reference signal when starting from an off-reference initial condition. As shown in Fig. \ref{fig:asymptotic_stability}, the capacitor voltage, $v_C$, starts with a large deviation in the initial condition of $v_C(t = 0 \; \text{s}) = 70$ V, but the controller is able to drive it to the reference over time. The controller also allow for step changes in voltage amplitude and frequency.

\subsection{Stability Under Changes in Load}

If the loading conditions change, we can ensure continued tracking of the reference by updating the control policy \eqref{eq: policy} through the $\bold{P}$ matrix used in the controller, as described in section \ref{sec:P_matrix}. We validate this result in simulation by changing our initial load of $R$ = 50 $\Omega$ to different values during the simulation. As shown in panel a) of Fig. \ref{fig:disturbed_load}, an appropriately updated controller continues tracking the reference after $R$ is increased to 60 $\Omega$ and 80 $\Omega$.

Updating the controller in response to load changes allows us to guarantee continued global, asymptotic stability. Moreover, empirically we find that for small disturbances an unmodified controller also performs well. However, stability is lost for large disturbances such as the shift to 80 $\Omega$, as shown in panels b) and c) of Fig. \ref{fig:disturbed_load}.


\subsection{Stability Under Extreme Reference Values}

Since the result of Theorem \ref{thm:stability} assumes that $\norm{\bold{\Gamma}}_2 < \frac{1}{V_m}$, and $\norm{\bold{\Gamma}}_2$ depends on $\omega$, our controller is not guaranteed to track a reference signal for all possible values of $V_m$ and $\omega$. As shown in Fig. \ref{fig:large_reference}, while good tracking performance is possible for a wide range of practical values of $V_m$ and $\omega$, increasing tracking error occurs when either value is too large. This happens because the assumptions of Theorem \ref{thm:stability} no longer hold. As can be seen in the figure, there is very good agreement between the predicted range of values where our controller should be stable and the experimentally determined range of values where we achieve stable reference tracking in practice.


\begin{figure}[!h]
    \includegraphics[scale=0.7]{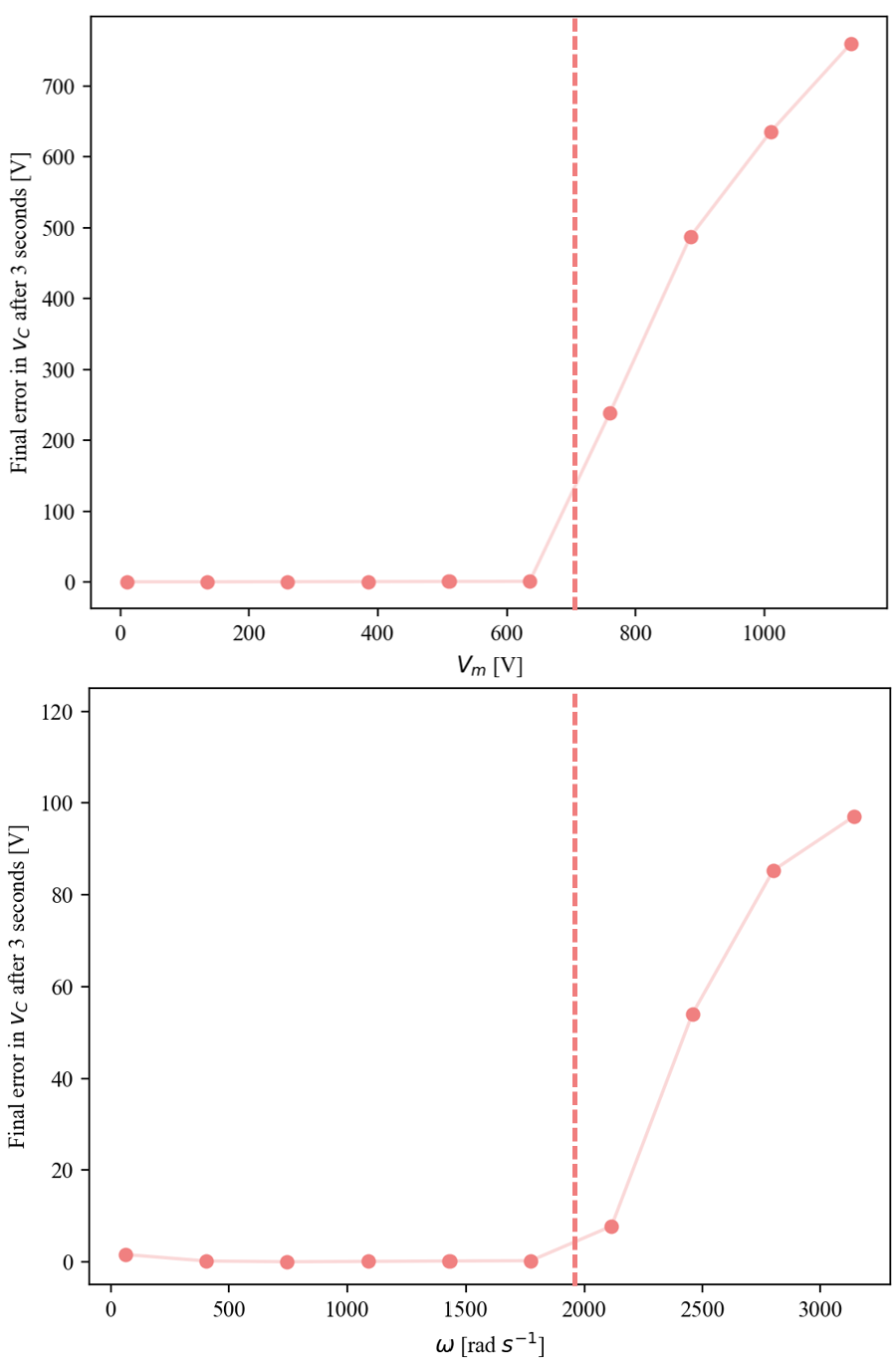}
    \caption{Increasing the magnitude, $V_m$, or the angular frequency, $\omega$, of the reference signal can prevent our controller from tracking it accurately. The top panel shows the result of increasing $V_m$ while holding $\omega$ constant at $120\pi$ rad $s^{-1}$, and the bottom panel shows increasing $\omega$ while holding $V_m$ constant at 177 V. The dotted line shows the cutoff for stability predicted by the conditions on Theorem \ref{thm:stability}; our results are consistent with the theoretical prediction.}
    \label{fig:large_reference}
\end{figure}

\subsection{Layering a Grid-Forming Control Strategy}

Until now, we have only focused on having the inverter track a sinusoid with a given amplitude and frequency. In practice, inverters will regulate their sinusoidal voltage amplitude and frequency based on a predetermined grid-forming control strategy. In general, these try to optimize for some system-level operation criteria, for example, power sharing, frequency regulation, or a stability metric. Examples in the literature include droop control, virtual oscillator control, synchronous machine emulation, and matching control, among others \cite{johnson2013synchronization, beck2007virtual, arghir2018grid, markovic2021understanding}.

These control strategies have been implemented in the past using the inverter averaged model \cite{yazdani2010voltage}. However, in this work, we use simulations to explore the performance of our hybrid control strategy while layering on top a grid-forming control strategy. We choose droop control as an example.


Droop control regulates the amplitude, $V_m$, and frequency, $\omega$, of the reference signal according to the following equations.
\begin{subequations}
\begin{align}
    \omega &= \omega^* + k_p(P^* - P) \\
    V_m &= V_m^* + k_q(Q^* - Q)
\end{align}
\end{subequations}
Here $P^*$ and $P$ represent, respectively, the setpoint and measured values for real power supplied by the inverter while $Q^*$ and $Q$ represent, respectively, setpoint and measured values for reactive power. $V_m^*$ and $\omega^*$ are the setpoint values for the reference signal, and $k_p$ and $k_q$ are the droop coefficients \cite{markovic2021understanding}. 

For our experiment, we begin with $V_m^*$ = 177 V and $\omega^* = 120 \pi \; \frac{\text{rad}}{\text{s}}$, as before. We change the steady-state real power being drawn by the resistive load as given by $$ P^* = \frac{V_{m}^{*2}}{R}$$ by changing the amplitude setpoint to $V_m^*$ = 185 V while leaving $\omega^*$ unchanged. Although Theorem \ref{thm:stability} does not guarantee stability under \textit{continuously} changing setpoints provided by droop control, we find that, for values of $k_p$ = 0.01 $\frac{rad/s}{W}$ and $k_q$ = 0.0025 $\frac{V}{VAR}$, our controller is able to track the changing reference very well, as shown in Fig. \ref{fig:droop_vc}.

\begin{figure}[!h]
    \includegraphics[scale=0.6]{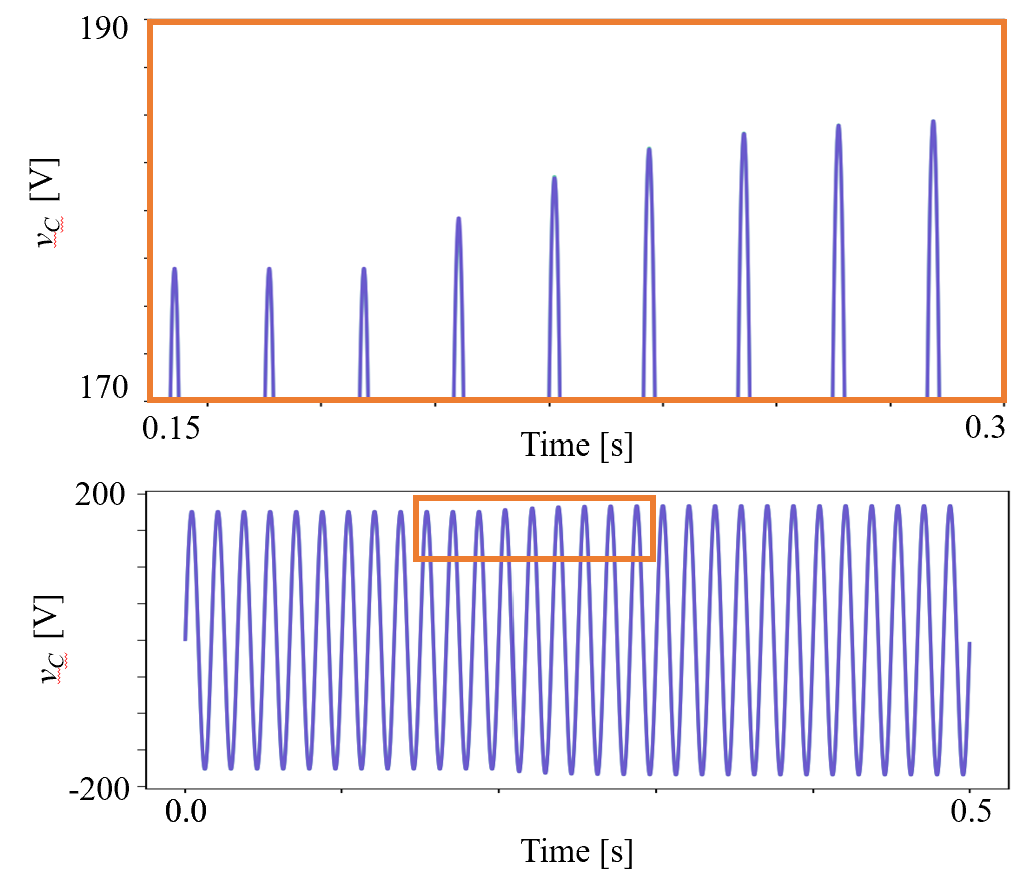}
    \caption{We show the behavior of our hybrid controller when the voltage amplitude and frequency setpoints are provided by grid-forming droop control. Our controller performs well in conjunction with droop control. In this example, the real power being drawn by the load is changed by changing the voltage magnitude setpoint, $V_m$, from 177 V to 185 V at $t=0.2$ s. The droop controller commands new setpoints for $V_m$ and $\omega$ based on the measured active and reactive power of the inverter output, resulting in increasing $V_m$. The bottom panel shows the voltage waveform over the time-window, while the top panel shows a zoomed-in version demonstrating how the voltage amplitude of the signal slowly reaches the new amplitude and frequency setpoints determined by droop control. Our controller is able to accurately track the changing reference, so that the output signal is not able to be distinguished from the reference signal in this plot.}
    \label{fig:droop_vc}
\end{figure}

\section{Discussion and Conclusions}
\label{sec: conclusion}

As the power grid transitions away from fossil fuel-based generation, it is expected that increasing numbers of DC energy sources will supply power to the grid.
The stability results we present are a step to tackle some of the challenges arising for ensured reliability of the grid with increased numbers of CIGs. 

We demonstrate a controller for a half-bridge inverter that explicitly models the switches instead of making time-averaged assumptions. The controller achieves globally, asymptotically stable reference-tracking capabilities, and can handle load changes and a grid-forming control strategy. Furthermore, using a Lyapunov function to guarantee error convergence also suggests further avenues for switching policy design options, for example by allowing a small amount of tracking error in return for reducing the switching frequency and reducing wear on the inverter switches \cite{buisson2005stabilisation}.

Though we believe a hybrid systems approach holds promise for further exploration, we briefly note a few limitations of the inverter control design we developed here. First, our stability result relies on the ability to measure and respond to changes in the sign of $\bold{B}^{\top}\bold{P}e$ fast enough. Furthermore, our stability result is so far only guaranteed for constant resistance loads, whereas most loads include constant inductance or capacitance, or even appear as elements with impedances that vary in response to source voltages. As a step toward resolving this limitation, we have shown the ability of our controller to update in real time to handle changes in resistive load, though this requires knowing what that change in load resistance is. In both of these cases, it may still be possible to provide rigorous stability guarantees after taking realistic sampling frequencies and control delays into account with further development of the theory, however, that is still to be explored. Moreover, ideally we would only have to rely on a sinusoidal voltage reference, however, our proposed control strategy currently requires a current reference as well, which in turn requires knowing the load. Unfortunately, this is not always possible in practice. Finally, our proposed strategy currently does not address parallel voltage sources. We leave considering these exciting research directions to future work.





\bibliographystyle{unsrt}
\bibliography{references}

\end{document}